\newtheorem{theorem}{Theorem}[section]
\newtheorem{proposition}[theorem]{Proposition}
\newtheorem{corollary}[theorem]{Corollary}
\theoremstyle{definition}
\newtheorem{algorithm}[theorem]{Algorithm}
\newcommand{\C}{\mathbb{C}}
\newcommand{\Z}{\mathbb{Z}}
\renewcommand{\tensor}{\otimes}
\renewcommand{\hat}{\widehat}
\renewcommand{\tilde}{\widetilde}
\newcommand{\cE}{\mathcal{E}}
\newcommand{\cH}{\mathcal{H}}
\newcommand{\ket}[1]{|#1\rangle}
\newcommand{\bra}[1]{\langle#1|}
\newcommand{\braket}[1]{\langle#1\rangle}
\newcommand{\tO}{\tilde{O}}
\renewcommand{\bar}{\overline}
\newcommand{\hb}{\hat{b}}
\newcommand{\hA}{\hat{A}}
\newcommand{\unif}{{\mathrm{unif}}}
\newcommand{\poly}{{\mathrm{poly}}}
\newcommand{\new}{{\mathrm{new}}}
\newcommand{\ceil}[1]{\lceil {#1} \rceil}
\newcommand{\vj}{\vec{j}}
\newcommand{\ie}{\textit{i.e.}}
\renewcommand{\sec}[1]{Section~\ref{#1}}
\newcommand{\cor}[1]{Corollary~\ref{#1}}
\newcommand{\prop}[1]{Proposition~\ref{#1}}
\newcommand{\alg}[1]{Algorithm~\ref{#1}}
\newcommand{\eq}[2]{\begin{equation}\label{#1}#2\end{equation}}
\newcommand{\defeq}[2]{\stackrel{\mathrm{def}}{=}}
\begin{document}
\title{Another subexponential-time quantum algorithm for the 
    dihedral hidden subgroup problem}

\author{Greg Kuperberg}
\email{greg@math.ucdavis.edu}
\thanks{Partly supported by NSF grant DMS CCF-1013079}
\affiliation{Department of Mathematics, University of
    California, Davis, CA 95616}

\begin{abstract} We give an algorithm for the hidden subgroup problem for
the dihedral group $D_N$, or equivalently the cyclic hidden shift problem,
that supersedes our first algorithm and is suggested by Regev's algorithm.
It runs in $\exp(O(\sqrt{\log N}))$ quantum time and uses $\exp(O(\sqrt{\log
N}))$ classical space, but only $O(\log N)$ quantum space.  The algorithm
also runs faster with quantumly addressable classical space than with fully
classical space.  In the hidden shift form, which is more natural for this
algorithm regardless, it can also make use of multiple hidden shifts.
It can also be extended with two parameters that trade classical space
and classical time for quantum time.  At the extreme space-saving end, the
algorithm becomes Regev's algorithm.  At the other end, if the algorithm is
allowed classical memory with quantum random access, then many trade-offs
between classical and quantum time are possible.
\end{abstract}
\maketitle

\section{Introduction}
\label{s:intro}

In a previous article \cite{Kuperberg:dhsp}, we established a
subexponential-time algorithm for the dihedral hidden subgroup problem,
which is equivalent to the abelian hidden shift problem.  That algorithm
requires $\exp(O(\sqrt{\log\ N}))$ time, queries, and quantum space to
find the hidden shift $s$ in the equation $g(x) = f(x+s)$, where $f$ and
$g$ are two injective functions on $\Z/N$.  In this article we present
an improved algorithm, \alg{a:outer}, which is much less expensive
in space, as well as faster in a heuristic model.  Our algorithm was
inspired by and generalizes Regev's algorithm \cite{Regev:dhsp}.  It uses
$\exp(O(\sqrt{\log\ N}))$ \emph{classical} space, but only $O(\log N)$
\emph{quantum} space.  We heuristically estimate a total computation time
of $\tO(2^{\sqrt{2\log_2 N}})$ for the new algorithm; the old algorithm
takes time $\tO(3^{\sqrt{2\log_3 N}})$.

The algorithm also has two principal adjustable parameters.  One parameter
allows the algorithm to use less space and more quantum time.  A second
parameter allows the algorithm to use more classical space and classical
time and less quantum time, if the classical space has quantum access
\cite{GLM:arch}.  (See also \sec{s:timespace}.)  Finally, the new algorithm
can take some advantage of multiple hidden shifts; somewhat anomalously,
our old algorithm could not.

The new algorithm can be called a \emph{collimation sieve}.  As in the
original algorithm and Regev's algorithm, the weak Fourier measurement
applied to a quantum query of the hiding function yields a qubit whose
phases depend on the hidden shift $s$.  The sieve makes larger qudits from
the qubits which we call \emph{phase vectors}.  It then collimates the
phases of the qudits with partial measurements, until a qubit is produced
whose measurement reveals the parity of $s$.  We also use a key idea from
Regev's algorithm to save quantum space.  The sieve is organized as a tree
with $O(\sqrt{\log N})$ stages, and we can traverse the tree depth first
rather than breadth first.  The algorithm still uses a lot of classical
space to describe the coefficients of each phase vector when it lies in a
large qudit.  If the qudit has dimension $\ell$, then this is only $O(\log
\ell)$ quantum space, but the classical description of its phases requires
$\tO(\ell)$ space.

The main discussion of the dihedral hidden subgroup problem has been as
an algorithm with a black-box hiding function.  Recently Childs, Jao, and
Soukharev \cite{CJS:isogenies} found a classical, white-box instance of
the dihedral hidden subgroup problem, or the abelian hidden shift problem.
The instance is that an isogeny between isogenous, ordinary elliptic
curves can be interpreted as a hidden shift on a certain abelian group.
Thus, just as Shor's algorithm allows quantum computers to factor large
numbers, an abelian hidden shift algorithm allows quantum computers to
find isogenies between large elliptic curves.  This is a new impetus to
study algorithms for the dihedral hidden shift problem.

Before describing the algorithm, we review certain points of quantum
complexity theory in general, and quantum algorithms for hidden structure
problems.  We adopt the general convention that if $X$ is a finite set
of orthonormal vectors in a Hilbert space $\cH$ (but not necessarily a
basis), then
$$\ket{X} \defeq \frac{1}{\sqrt{|X|}} \sum_{x \in X} \ket{x}$$
is the constant pure state on $X$.  Also if $X$ is an abstract finite set,
then $\C[X]$ is the Hilbert space in which $X$ is an orthonormal basis.
Also we use the notation
$$[n] = \{0,1,\ldots,n-1\},$$
so that $\C[[n]]$ becomes another way to write the vector space $\C^n$.

\acknowledgments

The author would like to thank Scott Aaronson and Oded Regev for
useful discussions.

\section{Quantum time and space}
\label{s:timespace}

As with classical algorithms, the computation ``time" of a quantum algorithm
can mean more than one thing.  One model of quantum computation is a quantum
circuit that consists of unitary operators and measurements, or even general
quantum operations, and is generated by a classical computer.  (It could
be adaptively generated using quantum measurements.)  Then the circuit
depth is one kind of quantum time, a type of parallel time.  The circuit
gate complexity is another kind of quantum time, a type of serial time.
We can justify serial quantum time with the following equivalence with a
RAM-type machine.

\begin{proposition} The gate complexity of a classically uniform family of
quantum circuits is equivalent, up to a constant factor, to the computation
time of a RAM-type machine with a classical address register, a quantum
data register, a classical tape, and a quantum tape.
\label{p:serial} \end{proposition}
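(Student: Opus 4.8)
The plan is to prove the equivalence by exhibiting mutual simulations with only constant-factor time overhead, in the spirit of the standard equivalence between Turing machines and RAM machines in classical complexity theory. In one direction, suppose we are given a classically uniform family of quantum circuits. The classical computer that emits the circuit becomes the classical control of the RAM machine, using the classical tape for its scratch work and the classical address register to name qubits. Each gate that it emits acts on a bounded number of qubits; to execute such a gate, the RAM machine places the relevant addresses in its address register, transfers those qubits to the quantum data register, applies the (fixed, bounded-size) unitary or measurement there, and writes the result back to the quantum tape. Each gate thus costs $O(1)$ RAM steps, and adaptive circuit generation is accommodated because measurement outcomes can be fed back to the classical control.

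In the reverse direction, I would compile a run of the RAM machine into a circuit. Purely classical steps---arithmetic on the address register and reads and writes of the classical tape---are absorbed into the classical uniformity generator and emit no quantum gates (or a bounded number, if one insists on representing classical bits by qubits). The quantum steps either act on the constant-size quantum data register, which is immediately a bounded-size gate, or move a datum between the data register and a cell of the quantum tape. The key point is that the target cell is named by the \emph{classical} address register, so at the moment the step is compiled the target qubit of the quantum tape is a definite, classically known wire; the generator simply emits a gate on that wire. Hence each RAM step, classical or quantum, compiles to $O(1)$ quantum gates, and a run of length $T$ becomes a circuit of gate complexity $O(T)$.

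The step I expect to be the crux, and the one that makes the constant-factor claim nontrivial rather than automatic, is precisely this handling of memory access. If the quantum tape could be addressed by a register held in superposition, then selecting the addressed cell would require a multiplexer whose gate count grows with the size of the memory, destroying the constant-factor equivalence; this is the expensive primitive that a quantumly addressable classical memory \cite{GLM:arch} is designed to provide. Here the obstacle does not arise: because the address register is classical---its value is always a definite string, even when that string was computed from earlier measurement outcomes---each access resolves to a single known wire at compile time, with no address-decoding circuitry. Once this is isolated, both simulations are routine bookkeeping, and one checks that the constant-size quantum data register and the bounded fan-in of each emitted gate keep every step's cost $O(1)$ in both directions.
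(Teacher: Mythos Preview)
Your proposal is correct and follows essentially the same approach as the paper: once the RAM model is set up precisely, both simulations are routine with $O(1)$ overhead per step. The paper's own argument is in fact terser than yours---after fixing the model it declares the proposition routine and sketches only the circuit-to-RAM direction---so your explicit treatment of the reverse simulation and your emphasis on why a \emph{classical} address register avoids multiplexer overhead are useful elaborations rather than a different method.
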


We will discuss \prop{p:serial} more rigorously in \sec{s:rigor}.
From either the circuit viewpoint or the RAM machine viewpoint, serial
computation time is a reasonable cost model:  in practice, gate operations
are more expensive than simple memory multiplied by clock time.

An interesting and potentially important variation of the random-access
model is quantum random access memory, or QRAM \cite{GLM:arch}.  In this
model, there is an address register composed of qubits and a memory can be
accessed in quantum superposition, whether or not the cells of the memory
tape are classical.  Of course, if the memory is classical, only read
operations can be made in quantum superposition.  A RAM quantum computer
thus has four possible types of memory tapes:  classical access
classical memory (CRACM), quantum access classical memory (QRACM),
classical access quantum memory (CRAQM), and quantum access
quantum memory (QRAQM).

Hypothetically, one could cost quantum access classical memory (QRACM) simply
as quantum memory.  But for all we know, quantum access classical memory
(QRACM) and classical-access quantum memory (CRAQM) are non-comparable
resources.  We agree with the suggestion \cite{BHT:collision} that
quantum-access classical memory could be cheaper than quantum memory with
either classical or quantum access.  After all, such memory does not need
to be preserved in quantum superposition.  Our own suggestion for a QRACM
architecture is to express classical data with a 2-dimensional grid of
pixels that rotate the polarization of light.  (A liquid crystal display
has a layer that does exactly that.)  When a photon passes through such
a grid, its polarization qubit reads the pixel grid in superposition.
Such an architecture seems easier to construct than an array of full qubits.

A good example of an algorithm that uses QRACM is the Brassard-H{\o}yer-Tapp
algorithm for the 2-to-1 collision problem \cite{BHT:collision}, as the
authors themselves point out.  Given a function $f:X \to Y$ where $X$ has
$N$ elements, the algorithm generates $N^{1/3}$ values of $f$ at random and
then uses a Grover search over $N^{2/3}$ values to find a collision; thus
the time complexity is $\tO(N^{1/3})$.  This is a large-memory algorithm,
but the bulk of the memory only needs to be quantumly addressable classical
memory.  By contrast, Ambainis' algorithm \cite{Ambainis:distinctness}
for the single collision problem uses true quantum memory.

\begin{proposition} In the RAM model, a quantum access memory with $N$
quantum or classical cells can be simulated with a classical linear access
memory, with the same cells, with $\tO(N)$ time overhead.
\label{p:address} \end{proposition}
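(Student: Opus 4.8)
The plan is to lay out the $N$ cells along a linear tape and simulate a single quantum-access operation by one sweep of the tape head across all cells. The crucial observation is that in a linear-access (Turing-tape) memory the head position is a \emph{classical} quantity that the controlling machine advances deterministically, whereas the address register being accessed may be in quantum superposition. So at each step of the sweep I would compare the current classical head position $i$ against the quantum address register, and, conditioned on equality, carry out the intended cell operation between the data register and the current cell.

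Concretely, I would write the desired quantum-access operation on an address register in state $\sum_a \alpha_a \ket{a}$ as follows: for each branch $\ket{a}$, apply some fixed reversible interaction $U$ between the data register and cell number $a$ (a controlled copy for a read, or a swap for a read/write on a quantum cell). To simulate this, sweep $i$ from $0$ to $N-1$, and at position $i$ apply $U$ between the data register and the current cell, controlled on the event that the address register equals the classical value $i$. Since $i$ is classical, this control is a comparison of a $\ceil{\log_2 N}$-qubit register against a fixed constant followed by a controlled $U$, costing $\tO(1)$ gates. The total cost is therefore $N$ steps times $\tO(1)$ per step, i.e.\ $\tO(N)$ overhead, plus one further pass to return the head to its origin so that the simulated operation composes cleanly; this uses only the same $N$ cells.

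Correctness reduces to checking that coherence is preserved. As the head sweeps, the branch $\ket{a}$ of the superposition receives the interaction $U$ exactly once, at the unique step $i = a$, and is untouched at every other step. Thus each branch is serviced precisely once, and the net effect is exactly the intended quantum-access operation: within each branch $\ket{a}$ the interaction $U$ acts between the data register and the $a$-th cell. For classical cells (QRACM) only the read (controlled-copy) version of $U$ is available, consistent with the fact that one cannot coherently write into classical memory; for quantum cells (QRAQM) the controlled swap realizes a genuine read/write.

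I expect the only real subtlety to be the bookkeeping that the controlled comparison leaves the address register undisturbed and is itself reversible, so that no spurious entanglement between the memory and the head's classical control is introduced. This is routine, since comparing a quantum register against a classical constant and conditioning a unitary on the outcome is a standard reversible primitive; everything else is a straightforward accounting of the single linear pass.
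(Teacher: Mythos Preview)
Your proposal is correct and takes essentially the same approach as the paper: sweep linearly through the $N$ cells while maintaining a classical head position, and at each cell apply the intended interaction between the data register and that cell, quantum-controlled on the address register matching the current position. The paper's argument is terser and phrases the conditional operation as ``add $T_C(n)$ to $R_Q$ on the quantum condition that $n$ matches $T_A$,'' and it additionally remarks on how to carry the position counter $n$ along a single tape (by interleaving it on odd cells), but the core idea is identical to yours.
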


\subsection{Some rigor}
\label{s:rigor}

Here we give more precise definitions of quantum RAM machine models, and
we argue Propositions~\ref{p:serial} and \ref{p:address}.  We would like
models that have no extraneous polynomial overhead, although they might have
polylogarithmic overhead.  On the other hand, it seems very difficult to
regularize polylogarithmic overhead.  In our opinion, different models of
computation that differ in polylogarithmic overhead could be equally good.
Actually, at some level a physical computer has at most the computational
strength of a 3-dimensional cellular automaton, where again, the total
number of operations is as important as the total clock time.  (Or even
a 2-dimensional cellular automaton; a modern computer is approximately a
2-dimensional computer chip.)  Procedural programming languages typically
create a RAM machine environment, but usually with polylogarithmic overhead
that depends on various implementation details.

A classical Turing machine $M$ is a tuple $(S,\Gamma,\delta)$, where $S$
is a finite set of states, $\Gamma$ is a finite alphabet, and $\delta$
is a transition map.  The Turing machine has a tape which is linear in
one direction with a sequence of symbols in $\Gamma$, which initially
are all the blank symbol $b \in \Gamma$ except for an input written
in the alphabet $\Sigma = \Gamma \setminus \{b\}$.  The state set $S$
includes an initial state, a ``yes" final state, and a ``no" final state.
Finally the transition map $\delta$ instructs the Turing machine to change
state, write to the tape, and move along the tape by one unit.

In one model of a RAM machine, it is a Turing machine $M$ with two tapes,
an address tape $T_A$ with the same rules as a usual linear tape; and a
main work tape $T_W$.  The machine $M$ (as instructed by $\delta$) can now
also read from or write to $T_W(T_A)$, meaning the cell of the tape $T_W$
at the address expressed in binary (or some other radix) on the tape $T_A$.
It is known \cite{LL:tree,PR:space} that a RAM machine in this form is
polylog equivalent to a \emph{tree Turing machine}, meaning a standard
Turing machine whose tape is an infinite rooted binary tree.

It is useful to consider an intermediate model in which the transition map
$\delta$ is probabilistic, \ie, a stochastic matrix rather than a function.
(Or a substochastic matrix rather than a partial function.)  Then the
machine $M$ arrives at either answer, or fails to halt, with a well-defined
probability.  This is a non-deterministic Turing machine, but it can still
be called classical computation, since it is based on classical probability.

One workable model of a RAM quantum computer is all of the above, except
with two work tapes $T_C$ and $T_Q$, and a register (a single ancillary
cell) $R_Q$.  In this model, each cell of $T_Q$ has the Hilbert
space $\C[\Gamma]$, and the cell $R_Q$ does as well.  The machine $M$
can apply a joint unitary operator (or a TPCP) to the state of $R_Q$ and
the state of the cell of $T_Q$ at the classical address in $T_A$.
Or it can decide its next state in $S$ by measuring the state in $R_Q$.
Or it can do some classical computation using the classical tape $T_C$
to decide what to do next.  All of this can be arranged so that
$\delta$ is a classical stochastic map (which might depend on quantum
measurements), $T_A$ and $T_C$ are classical but randomized, and all of
the quantum nondeterminism is only in the tape $T_Q$ and the register $R_Q$.
In some ways this model is more complicated than necessary, but it
makes it easy to keep separate track of quantum and classical resources.
$T_C$ is a CRACM and $T_Q$ is a CRAQM.

\prop{p:serial} is routine in this more precise model.  The machine can
create a quantum circuit drawn from a uniform family using $T_A$ and $T_C$.
Either afterwards or as it creates the circuit, it can implement it with
unitary operations or quantum operations on $T_Q$ and $R_Q$.  Finally it
can measure $R_Q$ to decide or help decide whether to accept or reject
the input.  At linear time or above, it doesn't matter whether the input
is first written onto $T_C$ or $T_Q$.

The basic definition of quantum addressability is to assume that the address
tape $T_A$ is instead a quantum tape.  For simplicity, we assume some
abelian group structure on the alphabet $\Gamma$.  Then adding the value
of $T_C(T_A)$ to $R_Q$ is a well-defined unitary operator on the joint
Hilbert space of $T_A$ and $R_Q$; in fact it is a permutation operator.
This is our model of QRACM.  Analogously, suppose that we choose a unitary
operator $U_{QR}$ that would act on the joint state of $T_Q(T_A)$ and $R_Q$
if $T_A$ were classical.  Then it yields a unitary operator $U_{QAR}$
on the joint state of $T_Q$, $T_A$, and $R_Q$ that, in superposition,
applies $U_{QR}$ to $T_Q(T_A)$ and $R_Q$.  This is a valid model of QRAQM.

To prove \prop{p:address}, we assume that $T_C$ can no longer be addressed
with $T_A$, and that instead the Turing machine has a position $n$ on
the tape $T_C$ that can be incremented or decremented.  Then to emulate
a quantum read of $T_C(T_A)$, the machine can step through the tape $T_C$
and add $T_C(n)$ to $R_Q$ on the quantum condition that $n$ matches $T_A$.
This is easiest to do if the machine has an auxiliary classical tape that
stores $n$ itself.  Even otherwise, the machine could space the data on
$T_C$ so that it only uses the even cells, and with logarithmic overhead
drag the value of $n$ itself on the odd cells.

\section{Hide and seek}
\label{s:hide}

\subsection{Hidden subgroups}
\label{s:hidegroup}

This section is strictly a review of ideas discussed in our earlier article
\cite{Kuperberg:dhsp}.

In the usual hidden subgroup problem, $G$ is a group, $X$ is an unstructured
set, and $f:G \to X$ is a function that hides a subgroup $H$.  This means
that $f$ factors through the coset space $G/H$ (either left or right cosets),
and the factor $f:G/H \to X$ is injective.  In a quantum algorithm to find
the subgroup $H$, $f$ is implemented by a unitary oracle $U_f$ that adds
the output to an ancilla register.  More precisely, the Hilbert space
of the input register is the group algebra $\C[G]$ when $G$ is finite
(or some finite-dimensional approximation to it when $G$ is infinite),
the output register is $\C[X]$, and the formula for $U_f$ is
$$U_f\ket{g,x_0} = \ket{g,f(g)+x_0}.$$

All known subexponential algorithms for the hidden subgroup problems make
no use of the output when the target set $X$ is unstructured.  (We do not
know whether it is even possible to make good use of the output with only
subexponentially many queries.)  The best description of what happens is
that the algorithm discards the output and leave the input register in
a mixed state $\rho$.  However, it is commonly said that the algorithm
measures the output.  This is a strange description if the algorithm
then makes no use of the measurement; its sole virtue is that it leaves
the quantum state of the input register in a pure state $\ket{\psi}$.
The state $\ket{\psi}$ is randomly chosen from a distribution, which is
the same as saying that the register is in a mixed state $\rho$.

If the output of $f$ is always discarded, then the algorithm works just
as well if the output of $f$ is a state $\ket{\psi(g)}$ in a Hilbert
space $\cH$.  The injectivity condition is replaced by the orthogonality
condition $\braket{\psi(g)|\psi(h)} = 0$ when $g$ and $h$ lie in distinct
cosets of $H$.  In this case $f$ would be implemented by a unitary
$$U_f\ket{g,x_0} = \ket{g} \tensor U_g\ket{x_0},$$
with the condition that if $x_0=0$, then
$$U_g\ket{0} = \ket{\psi(g)}.$$
Or we can have the oracle, rather than the algorithm, discard the output.
In this case, the oracle is a quantum operation (or quantum map) $\cE_{G/H}$
that measures the name of the coset $gH$ of $H$, and only returns the
input conditioned on this measurement.

Suppose that the group $G$ is finite.  Then it is standard to supply
the constant pure state $\ket{G}$ to the oracle $U_f$, and then discard
the output.  The resulting mixed state,
$$\rho_{G/H} = \cE_{G/H}(\ket{G}\bra{G}),$$
is the uniform mixture of $\ket{gH}$ over all (say) left cosets $gH$
of $H$.  This step can also be relegated to the oracle, so that we can
say that the oracle simply broadcasts copies of $\rho_{G/H}$ with no input.

Like our old algorithm, our new algorithm mainly makes use of the state
$\rho_{G/H}$, in the special case of the dihedral group $G = D_N$.
When $N = 2^n$, it is convenient to work by induction on $n$, so that
technically we use the state $\rho_{D_{2^k}/H_k}$ for $1 \le k \le n$.
However, this is not essential.  The algorithm can work in various ways
with identical copies of $\rho_{D_N/H}$.

An important point is that the state $\rho_{G/H}$ is block diagonal with
respect to the weak Fourier measurement on $\C[G]$.  More precisely,
the group algebra $\C[G]$ has a Burnside decomposition
$$\C[G] \cong \bigoplus_V V^* \tensor V,$$
where the direct sum is over irreducible representations of $G$ and also the
direct sum is orthogonal.  The weak Fourier measurement is the measurement
the name of $V$ in this decomposition.  Since $\rho_{G/H}$ is block diagonal,
if we have an efficient algorithm for the quantum Fourier transform on
$\C[G]$, then we might as well measure the name of $V$ and condition the
state $\rho_{G/H}$ to a state on $V^* \tensor V$, because the environment
already knows\footnote{In other words, Schr\"odinger's cat is out of the bag
(or box).} the name of $V$.  Moreover, the state on the ``row space" $V^*$
is known to be independent of the state on $V$ and carry no information
about $H$ \cite{GSVV:quantum}.  So the algorithm is left with the name
of $V$, and the conditional state $\rho_{V/H}$ on $V$.  The difference in
treatment between the value $f(g)$, and the name of the representation $V$,
both of which are classical data that have been revealed to the environment,
is that the name of $V$ is materially useful to existing quantum algorithms
in this situation.  So it is better to say that the name of $V$ is measured
while the value $f(g)$ is discarded.  (In fact, the two measurements don't
commute, so in a sense, they discredit each other.)

\subsection{Hidden shifts}
\label{s:hideshift}

In our earlier work \cite{Kuperberg:dhsp}, we pointed out that if $A$
is an abelian group, then the hidden subgroup problem on the generalized
dihedral group $G =(\Z/2) \ltimes A$ is equivalent to the abelian hidden
shift problem.  The hard case of a hidden subgroup on $G$ consists of
the identity and a hidden reflection.  (By definition, a reflection is
an element in $G \setminus A$, which is necessarily an element of order
2.) In this case, a single hiding function $f$ on $G$ is equivalent to
two injective functions $f$ and $g$ on $A$ that differ by a shift:
$$f(a) = g(a+s).$$
(Note that we allow an algorithm to evaluate them jointly in superposition.)
Finding the hidden shift $s$ is equivalent to finding the hidden reflection.

In this article, we will consider multiple hidden shifts.  By this we mean
that we have a set of endomorphisms
$$\phi_{j \in J}:A \to A$$
and a set of injective functions
$$f_{j \in J}:A \to X$$
such that 
$$f_j(a) = f_0(a+\phi_j(s)).$$
Here $J$ is an abstract finite indexing set with an element $0 \in J$.
We assume that we know each $\phi_j$ explicitly (with $\phi_0 = 0$) and
that we would like to find the hidden shift $s$.  In the cyclic
case $A = \Z/N$, we can write these relations as
$$f_j(a) = f_0(a + r_j s)$$
for some elements $r_j \in \Z/N$.  Note that, for $s$ to be unique, the
maps $\phi_j$ or the factors $r_j$ must satisfy a non-degeneracy condition.
Since we will only address multiple hidden shifts in the initial input
heuristically, we will not say too much about non-degeneracy when $|J| > 2$.
If $|J| = 2$ then $r_1$ or $\phi_1$ must be invertible to make $s$ unique,
in which case we might as well assume that they are the identity.

As a special case, we can look at the hidden subgroup problem in a
semidirect product $G = K \ltimes A$, where $K$ is a finite group, not
necessarily abelian.  Our original algorithm was a sieve that combined
irreducible representations of such a group $G$ to make improved irreducible
representations.  Anomalously, the sieve did not work better when $|K|
> 2$ than in the dihedral case.  The new algorithm can make some use of
multiple hidden shifts, although the acceleration from this is not dramatic.

The principles of \sec{s:hidegroup} apply to the hidden shift or multiple
hidden shift problem.  For the following, assume that $A$ is a finite group.
We write
$$f(j,a) = f_j(a),$$
and we can again make a unitary oracle $U_f$ that evaluates $f$ as follows:
$$U_f\ket{j,a,x_0} = \ket{j,a,f(j,a)+x_0}.$$
Suppose also that we can't make any sense of the value of $f(j,a)$, so we
discard it.  As in \sec{s:hidegroup}, the unitary oracle $U_f$ is thus
converted to a quantum map $\cE$ that makes a hidden measurement of the
value of $f$ and returns only the input registers, \ie, a state in $\C[J]
\tensor \C[A]$.  Suppose that we provide the map $\cE$ with a state of
the form
\eq{e:rho}{\rho = \sigma \tensor (\ket{A}\bra{A})}
where $\sigma$ is some possibly mixed state on $\C[J]$.  As in
\sec{s:hidegroup}, we claim that we might as well measure the Fourier mode
$\hb \in \hA$ of the state $\cE(\rho)$, because the environment already
knows what it is.  To review, the dual abelian group $\hA$ is by definition
the set of group homomorphisms
$$\hb:A \to S^1 \subset \C,$$
and the Fourier dual state $\ket{\hb}$ is defined as
$$\ket{\hb} = \frac{1}{\sqrt{|A|}} \sum_{a \in A} \bar{\hb(a)} \ket{a}.$$
We state the measurement claim more formally.

\begin{proposition} Let $\cE$ be the partial trace of $U_f$ given by
discarding the output, and let the state $\rho$ be as in \eqref{e:rho}.
Then the state $\cE(\rho)$ is block diagonal with respect to the eigenspaces
of the measurement of $\ket{\hb}$.  Also, the measurement has a uniformly
random distribution.
\label{p:}\end{proposition}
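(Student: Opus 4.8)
The plan is to compute the output state $\cE(\rho)$ in closed form and then read off both assertions directly. First I would realize $\cE$ concretely: attach the standard output ancilla $\ket{0}$, apply $U_f\ket{j,a,0}=\ket{j,a,f(j,a)}$, and trace out the output register. Since $U_f$ sends the computational basis to output states $\ket{f(j,a)}$ that are orthonormal in the output label, the partial trace annihilates each cross term $\ket{j,a}\bra{j',a'}$ for which $f(j,a)\neq f(j',a')$, leaving a Kronecker delta. Writing $\rho=\sigma\tensor\ket{A}\bra{A}$ with $\sigma=\sum_{j,j'}\sigma_{jj'}\ket{j}\bra{j'}$ as in \eqref{e:rho}, this gives
$$\cE(\rho)=\frac{1}{|A|}\sum_{j,j',a,a'}\sigma_{jj'}\,\delta_{f(j,a),f(j',a')}\,\ket{j,a}\bra{j',a'}.$$

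The shift structure is what makes this tractable. Since $f(j,a)=f_0(a+\phi_j(s))$ and $f_0$ is injective, the equality $f(j,a)=f(j',a')$ holds exactly when $a'=a+t_{jj'}$, where I write $t_{jj'}$ for $\phi_j(s)-\phi_{j'}(s)$. Substituting, the sum over $a,a'$ factors into $\ket{j}\bra{j'}$ on $\C[J]$ tensored with the single shift operator $\sum_a\ket{a}\bra{a+t_{jj'}}$ on $\C[A]$. I would then pass to the Fourier basis via $\ket{a}=|A|^{-1/2}\sum_{\hb}\hb(a)\ket{\hb}$ and the character orthogonality relation $\sum_a\hb(a)\overline{\hb'(a)}=|A|\,\delta_{\hb\hb'}$; a one-line computation gives $\sum_a\ket{a}\bra{a+t}=\sum_{\hb}\overline{\hb(t)}\,\ket{\hb}\bra{\hb}$, which is diagonal in the Fourier basis. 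Hence
$$\cE(\rho)=\frac{1}{|A|}\sum_{\hb}\Bigl(\sum_{j,j'}\sigma_{jj'}\,\overline{\hb(t_{jj'})}\,\ket{j}\bra{j'}\Bigr)\tensor\ket{\hb}\bra{\hb}.$$
Every term sits inside a single eigenspace $\ket{\hb}\bra{\hb}$ of the Fourier measurement on $\C[A]$ (with arbitrary content on $\C[J]$), which is precisely the claimed block-diagonality.

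Finally, for the distribution I would take the trace of the $\hb$-block. The partial trace over $\C[J]$ retains only the diagonal terms $j=j'$, for which $t_{jj}=0$ and $\overline{\hb(0)}=1$; thus the weight of outcome $\hb$ is $|A|^{-1}\sum_j\sigma_{jj}=|A|^{-1}\,\mathrm{Tr}\,\sigma=|A|^{-1}$, the same for every $\hb$. Since $|\hA|=|A|$ this is the uniform distribution. I expect the only real obstacle to be the bookkeeping in the first two steps---carefully converting the collision condition $f(j,a)=f(j',a')$ into the clean shift $a'=a+t_{jj'}$, and tracking which character phase $\overline{\hb(t_{jj'})}$ attaches to each $\C[J]$ matrix element. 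Once the shift operator is diagonalized both conclusions are immediate, and nothing depends on the particular $\sigma$ beyond $\mathrm{Tr}\,\sigma=1$.
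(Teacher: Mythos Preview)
Your argument is correct but takes a different route from the paper. You compute $\cE(\rho)$ explicitly: the collision condition $f(j,a)=f(j',a')$ collapses to a single translation on $\C[A]$, and Fourier-diagonalizing that translation yields the block structure and the uniform weights in one stroke. The paper instead argues by symmetry. For block-diagonality it observes that $\rho$ is invariant under the translation action of $A$ on the $\C[A]$ register, and that $\cE$ is $A$-equivariant because a translation merely permutes the output values of $f$, which are then discarded; hence $\cE(\rho)$ commutes with $A$ and is automatically block-diagonal over the characters $\hb$. For uniformity the paper inserts an auxiliary measurement of $j$, which commutes with both the Fourier measurement and the discard; once $j$ is fixed, injectivity of $f_j$ makes discarding the output equivalent to the complete measurement of $a$ on $\ket{A}$, leaving the maximally mixed state on $\C[A]$ and hence a uniform $\hb$. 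Your explicit calculation has the bonus of exhibiting the post-measurement block on $\C[J]$ directly---essentially deriving \eqref{e:fourier1} along the way---while the paper's invariance argument is shorter and avoids writing out the Fourier transform of the shift operator.
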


\begin{proof} The key point is that $\rho$ is an $A$-invariant state
and $\cE$ is an $A$-invariant map, where $A$ acts by translation on the
$\C[A]$ register.  The state $\ket{A}$ is $A$-invariant by construction,
while $A$ has no action on the $\C[J]$ register.  Meanwhile $\cE$ is
$A$-invariant because it discards the output of $f$, and translation by
$A$ can be reproduced by permuting the values of $f$.  Since $\rho$ is an
$A$-invariant state, and since the elements of $A$ are unitary, this says
exactly that $\rho$ as an operator commutes with $A$.  The eigenspaces of the
action of $A$ on $\C[J] \tensor \C[A]$ are all of the form $\C[J] \tensor
\ket{\hb}$, so the fact that $\rho$ commutes with $A$ is equivalent to the
conclusion that $\rho$ is block diagonal with respect to the eigenspaces
of the measurement $\ket{\hb}$.

To prove the second part, imagine that we also measure $\ket{j}$ on the
register $\C[J]$.  This measurement commutes with both measuring the Fourier
mode $\ket{\hb}$ and measuring or discarding the output register $\C[X]$,
so it changes nothing if we measure $\ket{j}$ first.  So we know $j$,
and since $f_j:A \to X$ is injective, measuring its value is the complete
measurement of $\ket{a}$ starting with the constant pure state $\ket{A}$.
This yields the uniform state $\rho_\unif$ on $\C[A]$, so the value of
$\ket{\hb}$ is also uniformly distributed.
\end{proof}

Suppose further that in making the state $\rho$, the state $\sigma$ on the
$\C[J]$ register is the constant pure state $\ket{J}$.  If the measured
Fourier mode is $\hb \in \hA$, then the state of the $j$ register after
measuring this mode is:
\eq{e:fourier1}{\ket{\psi} \propto \sum_{j \in J} \hb(\phi_j(s))\ket{j}.}
This can be written more explicitly in the cyclic case $A = \Z/N$.  In this
case there is an isomorphism $A \cong \hA$, and we can write any element
$\hb \in \hA$ as
$$\hb(a) = \exp(2\pi i ab/N),$$
and we can also write
$$\phi_j(a) = r_j a$$
for some elements $r_j \in \Z/N$.  So we can then write
\eq{e:fourier2}{\ket{\psi} \propto
    \sum_{j \in J} \exp(2\pi i b r_js)\ket{j}.}
At this point we know both $b$ and each $r_j$, although for different
reasons:  $r_j$ is prespecified by the question, while $b$ was measured
and is uniformly random.  Nonetheless, we may combine these known values
as $b_j = r_jb$ and write:
\eq{e:fourier3}{\ket{\psi} \propto
    \sum_{j \in J} \exp(2\pi i b_js)\ket{j}.}

To conclude, the standard approach of supplying the oracle $U_f$ with
the constant pure state and discarding the output leads us to the state
\eqref{e:fourier1}, or equivalently \eqref{e:fourier2} or \eqref{e:fourier3}.
(Because measuring the Fourier mode does not sacrifice any quantum
information.)  In the rest of this article, we will assume a supply of
states of this type.

\section{The algorithm}
\label{s:algorithm}

\subsection{The initial and final stages}
\label{s:initfin}

For simplicity, we describe the hidden shift algorithm when $A = \Z/N$ and $N
= 2^n$.  The input to the algorithm is a supply of states \eqref{e:fourier3}.
As explained in our previous work \cite{Kuperberg:dhsp}, the problem for any
$A$, even $A$ infinite as long as it is finitely generated, can be reduced
to the cyclic case with overhead $\exp(O(\sqrt{d}))$.  Also for simplicity,
we will just find the parity of the hidden shift $s$.  Also as explained
in our previous work \cite{Kuperberg:dhsp}, if we know the parity of $s$,
then we can reduce to a hidden shift problem on $\Z/2^{n-1}$ and work by
induction.  Finally, just as in our previous algorithm, we seek a wishful
special case of \eqref{e:fourier3}, namely the qubit state
\eq{e:goal}{\ket{\psi} \propto \ket{0} + \exp(2\pi i (2^{n-1})s/2^n)
    \ket{1} = \ket{0} + (-1)^s\ket{1}.}
If we measure whether $\ket{\psi}$ is $\ket{+}$ or $\ket{-}$,
that tells us the parity of $s$.

Actually, although we will give all of the details in base 2, we could
just as well work in any fixed base, or let $N$ be any product of small
numbers.  This generalization seems important for precise optimization for
all values of $N$, which is an issue that we will only address briefly in
the conclusion section.

\subsection{Combining phase vectors}
\label{s:combine}

Like the old algorithm, the new algorithm combines unfavorable qubits states
$\ket{\psi}$ to make more favorable ones in stages, but we change what
happens in each stage.  The old algorithm was called a sieve, because it
created favorable qubits from a large supply of unfavorable qubits, just as
many classical sieve algorithms create favorable objects from a large supply
of candidates \cite{AKS:lattice}.  The new algorithm could also be called
a sieve, but all selection is achieved with quantum measurement instead
of a combination of measurement and matching.  The process can be called
collimation, by analogy with its meaning in optics:  Making rays parallel.

Consider a state of the form \eqref{e:fourier3}, where we write the
coefficient $b_j$ instead as a function $b(j)$, except that we make no
assumption that $b_j = r_jb$ for a constant $b$.  We also assume that the
index set is explicitly the integers from $0$ to $\ell-1$ for some $\ell$,
the \emph{length} of $\ket{\psi}$:
$$J = [\ell] = \{0,1,\ldots,\ell-1\}.$$
We obtain:
$$\ket{\psi} \propto \sum_{0 \le j < \ell} \exp(2\pi i b(j)s/2^n)\ket{j}.$$
Call a vector of this type a \emph{phase vector}.  We view a phase vector
as favorable if every difference $b(j_1)-b(j_2)$ is divisible by many
powers of 2, and we will produce new phase vectors from old ones that
are more favorable.  In other words, we will \emph{collimate} the phases.
The algorithm collimates phase vectors until finally it produces a state
of the form \eqref{e:goal}.  Note that the state $\ket{\psi}$ only changes
by a global phase if we add a constant to the function $b$.  (Or we can
say that as a quantum state, it does not change at all.)  If $2^m|b(j_1)
- b(j_2)$ for some $m \le n$, then we can both subtract a constant from $b$
and divide the numerator and denominator of $b(j)/2^n$ by $2^m$.  So we
can $\ket{\psi}$ as
$$\ket{\psi} \propto \sum_{0 \le j < \ell} \exp(2\pi i b(j)s/2^h)\ket{j},$$
where $h = m-n$ is the \emph{height} of $\ket{\psi}$.  (We do not necessarily
assign the smallest height $h$ to a given $\ket{\psi}$.)  We would like
to collimate phase vectors to produce one with length 2 and height 1
(but not height 0).

Given two phase vectors of height $h$,
\begin{align*}
\ket{\psi_1} &\propto \sum_{0 \le j_1 < \ell_1} \hspace{-.5em} 
    \exp(2\pi i b_1(j_1)s/2^h)\ket{j_1} \\
\ket{\psi_2} &\propto \sum_{0 \le j_2 < \ell_2} \hspace{-.5em} 
    \exp(2\pi i b_2(j_2)s/2^h)\ket{j_2},
\end{align*}
their joint state is a double-indexed phase vector that
also has height $h$:
\begin{align*}
\ket{\psi_1,\psi_2} &= \ket{\psi_1} \tensor \ket{\psi_2} \\
    &\propto \sum_{\substack{0 \le j_1 < \ell_1 \\ 0 \le j_2 < \ell_2}}
    \hspace{-.5em} \exp(2\pi i (b_1(j_1)+b_2(j_2))s/2^h)\ket{j_1,j_2}.
\end{align*}
We can now collimate this phase vector by measuring
$$c \equiv b_1(j_1) + b_2(j_2) \pmod{2^m}$$
for some $m < h$.  Let $P_c$ be the corresponding measurement projection.
The result is another phase vector
$$\ket{\psi} = P_c \ket{\psi_1,\psi_2},$$
but one with a messy indexing set:
$$J = \{(j_1,j_2) | b_1(j_1) + b_2(j_2) \equiv c \pmod{2^m}\}.$$
We can compute the index set $J$, in fact entirely classically, because
we know $c$.  We can compute the phase multiplier function $b$ as the sum
of $b_1$ and $b_2$.  Finally, we would like to reindex $\ket{\psi}$ using
some bijection $\pi:J \to [\ell_\new]$, where $\ell_\new = |J|$.  As we
renumber $J$, we also permute the phase vector $P_c \ket{\psi_1,\psi_2}$.
Then there is a subunitary operator
$$U_\pi:\C^{\ell_1} \tensor \C^{\ell_2} \to \C^{\ell_\new}$$
that annihilates vectors orthogonal to $\C[J]$ and that is unitary on
$\C[J]$.  Then
$$\ket{\psi_\new} = U_\pi \ket{\psi}.$$
The vector $\ket{\psi_\new}$ has height $h-m$.

Actually, collimation generalizes to more than two input vectors.  Given a
list of phase vectors
$$\ket{\psi_1},\ket{\psi_2},\ldots,\ket{\psi_r},$$
and given a collimation parameter $m$, we can produce a collimate state
$\ket{\psi_\new}$ from them.  We summarize the process in algorithm form:
\begin{algorithm}[Collimation] Input: A list of phase vectors
$$\ket{\psi_1},\ket{\psi_2},\ldots,\ket{\psi_r}$$
of length $\ell_1,\ldots,\ell_r$, and a collimation parameter $m$.
\begin{description}
\item[1.] Notionally form the phase vector
$$\ket{\psi} = \ket{\psi_1} \tensor \ket{\psi_2} \tensor \dots \tensor
    \ket{\psi_r}$$
with indexing set
$$[\ell_1] \times [\ell_2] \times \cdots \times [\ell_r]$$
and phase multiplier function
$$b(\vj) = b(j_1,j_2,\ldots,j_r) = b_1(j_1) + b_2(j_2) + \dots + b_r(j_r).$$
\item[2.] Measure $\ket{\psi}$ according to the value of
\eq{e:congvec}{c = b(\vj) \bmod 2^m}
to obtain $P_c\ket{\psi}$.
\item[3.] Find the set $J$ of tuples $\vj$ that satisfy
\eqref{e:congvec}.  Set $\ell_\new = |J|$ and pick a bijection
$$\pi:J \to [\ell_\new].$$
\item[4.] Apply $\pi$ to the value of $b$ on $J$ and apply $U_\pi$ to
$\ket{\psi}$ to make $\ket{\psi_\new}$ and return it.
\end{description}
\label{a:collimate} \end{algorithm}

\alg{a:collimate} is our basic method to collimate phase vectors.
We can heuristically estimate the length $\ell$ by assuming that
$b(\vj)$ is uniformly distributed mod $2^m$.  In this case,
\eq{e:ratio}{\ell_\new \approx 2^{-m}\ell_1\ell_2\dots\ell_r.}
So $\ell$ stays roughly constant when $\ell \approx 2^{m/(r-1)}$.

\subsection{The complexity of collimation}
\label{s:collimate}

\begin{proposition} Let $\ket{\psi_1}$ and $\ket{\psi_2}$ be two phase
vectors of length $\ell_1$ and $\ell_2$ and height $h$, and suppose that
they are collimated mod $2^m$ to produce a phase vector $\ket{\psi_\new}$
of length $\ell_\new$.  Suppose also that the quantum computer is allowed
QRACM.  Then taking $\ell_{\max} = \max(\ell_1,\ell_2,\ell_\new)$ and $r=2$,
\alg{a:collimate} needs
\begin{itemize}
\item $\tO(\ell_{\max})$ classical time (where ``$\tO$" allows factors of
    both $\log \ell_{\max}$ and $h \le n = \log N$).
\item $O(\ell_{\max} h)$ classical space,
\item $O(\ell_{\max} \max(m,\log \ell_{\max}))$ classical space with quantum
    access,
\item $\poly(\log \ell_{\max})$ quantum time, and
\item $O(\log \ell_{\max})$ quantum space.
\end{itemize}
\label{p:collimate} \end{proposition}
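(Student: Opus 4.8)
The plan is to account for each resource in \alg{a:collimate} with $r=2$ by separating the classical bookkeeping (Steps 1--3, plus updating the phase function in Step 4) from the genuinely quantum operations (the measurement in Step 2 and the isometry $U_\pi$ in Step 4), and to observe that a single sorted data structure drives both sides.

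First, the classical side. I would store each phase-multiplier function $b_i$ as an array of $\ell_i$ integers modulo $2^h$, giving the $O(\ell_{\max} h)$ classical-space bound. The only step that threatens to cost more than $\tO(\ell_{\max})$ classical time is Step 3, since the notional product index set $[\ell_1] \times [\ell_2]$ has $\ell_1 \ell_2$ elements. The key observation is that a pair $(j_1,j_2)$ lies in $J$ exactly when $b_1(j_1) \equiv c - b_2(j_2) \pmod{2^m}$, so I never enumerate the full product. Instead I sort the indices $j_1$ by the residue $b_1(j_1) \bmod 2^m$ and the indices $j_2$ by $b_2(j_2) \bmod 2^m$, then merge-join the two sorted lists, pairing the residue-$v$ run of $b_1$ with the residue-$(c-v) \bmod 2^m$ run of $b_2$. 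This produces exactly the $\ell_\new = |J|$ surviving pairs in time $\tO(\ell_1 + \ell_2 + \ell_\new) = \tO(\ell_{\max})$, and as a byproduct yields, for each index, its residue class, an offset for that class, and its rank within the class. Writing $b_\new(k) = b_1(j_1) + b_2(j_2)$ for $k = \pi(j_1,j_2)$ then costs a further $\tO(\ell_\new)$ time and $O(\ell_\new h)$ space.

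Next, the quantum side. The quantum state never leaves the index registers, which hold elements of $[\ell_1]$, $[\ell_2]$, and $[\ell_\new]$, so it occupies $O(\log \ell_{\max})$ qubits; the arithmetic below uses only $O(\max(m,\log \ell_{\max}))$ transient ancilla qubits, which is $O(\log \ell_{\max})$ in the regime $\ell \approx 2^m$ (from \eqref{e:ratio} with $r=2$) in which the sieve is run. To realize Step 2, I load $b_1(j_1) \bmod 2^m$ and $b_2(j_2) \bmod 2^m$ from QRACM tables keyed on $j_1$ and on $j_2$, add them mod $2^m$ into an ancilla, and measure it to obtain $c$; this projects onto $P_c \ket{\psi}$ using $\poly(\log \ell_{\max})$ quantum time and a constant number of QRACM reads. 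To realize $U_\pi$ without a two-dimensional table of size $\ell_1 \ell_2$, I define $\pi$ from the sorted layout above: $\pi(j_1,j_2)$ is the offset of $j_1$'s residue class plus a lexicographic combination of the ranks of $j_1$ and $j_2$ within their matched buckets. Each such quantity is a single QRACM read keyed on $j_1$ or on $j_2$, so the forward map $\ket{j_1,j_2} \mapsto \ket{j_1,j_2,k}$ uses $O(1)$ reads and $O(1)$ additions; I then erase $(j_1,j_2)$ by an inverse read from a table indexed by $k \in [\ell_\new]$ that stores $\pi^{-1}(k)$, leaving $\ket{k}$. These tables together hold $O(\ell_1 + \ell_2 + \ell_\new) = O(\ell_{\max})$ entries, each $O(\max(m,\log \ell_{\max}))$ bits wide (residues need $m$ bits, offsets and ranks need $\log \ell_{\max}$ bits), giving the quantum-access space bound, and $U_\pi$ runs in $\poly(\log \ell_{\max})$ quantum time.

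The main obstacle is Step 3 together with the design of $U_\pi$: a naive reading of the algorithm suggests both touching all $\ell_1 \ell_2$ product indices and storing a $\pi$-table of that size, which would violate every bound. The crux is to show that a single pass of sorting and merge-joining both enumerates $J$ in linear time and furnishes the per-coordinate offsets and ranks from which $\pi$ and $\pi^{-1}$ are computed by $O(1)$ one-dimensional QRACM lookups. Once that data structure is in place, the remaining verifications --- the bit-widths of the tables, the $O(\log \ell_{\max})$ qubit count, and the $\poly(\log \ell_{\max})$ gate counts for the mod-$2^m$ and index arithmetic --- are routine.
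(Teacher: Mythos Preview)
Your argument is correct and tracks the paper's proof closely: both accounts store the full phase tables in $O(\ell_{\max} h)$ classical bits, keep only the low $m$ bits in QRACM, perform the mod-$2^m$ measurement by two QRACM lookups plus an addition, enumerate $J$ by a sorted merge over residue classes rather than the full product, and apply $U_\pi$ reversibly via a forward lookup of $\pi$ followed by an inverse lookup of $\pi^{-1}$. The one genuine difference is in the representation of $\pi$: the paper stores $J$ itself as an associative array sorted on $(j_1,j_2)$ together with a pointer index from $[\ell_1]$, whereas you compute $\pi(j_1,j_2)$ arithmetically from per-coordinate offsets and ranks; your variant uses $O(1)$ QRACM reads per lookup instead of a short search inside a block, but both fit comfortably within the stated $\poly(\log \ell_{\max})$ quantum time and $O(\ell_{\max}\max(m,\log\ell_{\max}))$ QRACM bounds.
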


\begin{proof} First, we more carefully explain the data structure of a
phase vector $\ket{\psi}$.  The vector $\ket{\psi}$ itself can be stored in
$\ceil{\log_2 \ell_{\max}}$ qubits.  The table $b$ of phase multipliers
is a table of length $O(\ell_{\max})$ whose entries have $h$ bits, so this
is $O(\ell_{\max} h)$ bits of classical space.  \alg{a:collimate} needs the
low $m$ bits of each entry in the table, so $O(\ell_{\max} m)$ bits are kept
in quantum access memory.  We also assume that the table $b$ is sorted on
low bits.

We follow through the steps of \alg{a:collimate}, taking care to manage
resources at each step.  First, measuring
$$c \equiv (b_1(j_1) + b_2(j_2)) \pmod{2^m}$$
can be done in quantum time $\poly(\log \ell,m)$ by looking up the values
and adding them.  As usual, when performing a partial quantum measurement,
the output must be copied to an ancilla and the scratch work (in this case
the specific values of $b_1$ and $b_2$) must be uncomputed.

The other step of collimation is the renumbering.  To review, the measurement
of $c$ identifies a set of double indices
$$J \subseteq [\ell_1] \times [\ell_2].$$
These indices must be renumbered with a bijection
$$\pi:J \to [\ell_\new],$$
indeed the specific bijection that sorts the new phase multiplier table $b =
b_1 + b_2$.  The function $\pi$ can be computed in classical time $\tO(\ell)$
using standard algorithms, using the fact that $b_1$ and $b_2$ are already
sorted.  More explicitly, we make an outer loop over decompositions
$$c = c_1 + c_2 \in \Z/2^m.$$
In an inner loop, we write all solutions to the equations
$$b_1(j_1) \equiv c_1 \pmod{2^m} \qquad b_2(j_2) \equiv c_2 \pmod{2^m}$$
using sorted lookup.  This creates a list of elements of $J$ in some order.
We can write the values of
$$b(j_1,j_2) = b_1(j_1) + b_2(j_3)$$
along with the pairs $(j_1,j_2) \in J$ themselves.  Then $b$ can be
sorted and $J$ can be sorted along with it.

This creates a stored form of the \emph{inverse} bijection $\pi^{-1}$,
which is an ordinary 1-dimensional array.  We will want this, and we will
also want quantum access to the \emph{forward} bijection $\pi$ stored
as an associative array.  Since we will need quantum access to $\pi$,
we would like to limit the total use of this expensive type of space.
We can make a special associative array to make sure that the total extra
space is $O(\ell_{\max} (\log \ell_{\max}))$ bits.  For instance, we can make
a list of elements of $J$ sorted by $(j_1,j_2)$, a table of $\pi$ sorted
in the same order, and an index of pointers from $[\ell_1]$ to the first
element of $J$ with any given value of $j_1$.

The final and most delicate step is to apply the bijection $\pi$ to
$\ket{\psi}$ in quantum polynomial time in $\log \ell$.  Imagine more
abstractly that $\ket{\psi}$ is a state in a Hilbert space $\C^s$ supported
on a subset $X \subseteq [s]$, and that we would like to transform it to
a state in a Hilbert space $\C^t$ supported on a subset $Y \subset [t]$ of
the same size, using a bijection $\pi:X \to Y$.  We use the group structures
$[s] = \Z/s$ and $[t] = \Z/t$, and we assume quantum access to both $\pi$
and $\pi^{-1}$.  Then we will use these two permutation operators acting
jointly on a $\C^s$ register and a $\C^t$ register:
$$U_1\ket{x,y} = \ket{x,y+\pi(x)}
    \qquad U_2\ket{x,y} = \ket{x-\pi^{-1}(y),y}.$$
A priori, $\pi(x)$ is only defined for $x \in X$ and $\pi^{-1}(y)$ is only
defined for $y \in Y$; we extend them by 0 (or extend them arbitrarily)
to other values of $x$ and $y$.  Then clearly
$$U_2 U_1\ket{x,0} = \ket{0,\pi(x)}.$$
Thus
$$\ket{\psi_\new} = U_2 U_1 \ket{\phi,0}$$
is what we want. Following the rule of resetting the height to $0$, we
can also let
$$b_\new(j) = b(j)/2^m.$$
\end{proof}

\begin{corollary} Taking the hypotheses of \prop{p:collimate}, if the
quantum computer has no quantum access memory, then \alg{a:collimate}
can be executed with $r=2$ with
\begin{itemize}
\item $\tO(\ell_{\max})$ quantum time (and classical time),
\item $\tO(\ell_{\max})$ classical space, and
\item $O(\log \ell_{\max})$ quantum space.
\end{itemize}
\label{c:collimate} \end{corollary}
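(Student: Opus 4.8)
The plan is to obtain this as an immediate consequence of \prop{p:collimate} by replacing every use of QRACM with the classical simulation guaranteed by \prop{p:address}. In the proof of \prop{p:collimate}, quantum access to classical memory enters in exactly two places. First, the partial measurement of $c \equiv b_1(j_1) + b_2(j_2) \pmod{2^m}$ requires reading $b_1(j_1)$ and $b_2(j_2)$ while $j_1$ and $j_2$ are held in quantum superposition, i.e.\ quantum reads of the low-order parts of the sorted tables $b_1$ and $b_2$. Second, the reindexing operators
$$U_1\ket{x,y} = \ket{x,y+\pi(x)} \qquad U_2\ket{x,y} = \ket{x-\pi^{-1}(y),y}$$
require quantum access to the arrays storing $\pi$ and $\pi^{-1}$. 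These are the only superposed memory lookups in \alg{a:collimate}; everything else is either classical preprocessing or cheap quantum arithmetic.

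First I would invoke \prop{p:address}: a quantum read of a table with $O(\ell_{\max})$ cells can be emulated on a classical linear-access tape holding the same data, with $\tO(\ell_{\max})$ time overhead, by sweeping along the tape and adding each cell's contents to the quantum register on the quantum condition that the address matches the tape position. Each lookup above touches a table of size $O(\ell_{\max})$, and \alg{a:collimate} performs only a constant number of them (two reads for the measurement, one application each of $U_1$ and $U_2$). Hence the total quantum time becomes $O(1)\cdot\tO(\ell_{\max}) = \tO(\ell_{\max})$, in place of the $\poly(\log \ell_{\max})$ bound of \prop{p:collimate}, while the remaining $\poly(\log\ell_{\max})$ genuinely-quantum operations are absorbed into the $\tO$. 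The classical preprocessing — forming and sorting $b = b_1 + b_2$ and building $\pi$ and $\pi^{-1}$ — was already $\tO(\ell_{\max})$ and is unaffected.

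For space, the former QRACM of \prop{p:collimate}, of size $O(\ell_{\max}\max(m,\log\ell_{\max}))$, is now ordinary classical linear-access memory on the same cells. Since $m \le h \le n$, this is absorbed into the $O(\ell_{\max} h)$ classical-space bound (the $\log\ell_{\max}$ term contributing only a $\tO$ factor), so the total classical space is $\tO(\ell_{\max})$. The quantum register is untouched by the emulation, so the quantum space remains $O(\log\ell_{\max})$.

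Because \prop{p:address} does the real work, there is no deep obstacle here; the only point needing care is the \emph{faithfulness} of the emulation. The conditional ``add on matching address'' sweep must reproduce each QRACM read exactly, including the uncomputation of the scratch values of $b_1,b_2$ after the measurement and of $\pi,\pi^{-1}$ after $U_2U_1$, so that the post-measurement state and the state $U_2U_1\ket{\phi,0}$ agree with the QRACM version rather than leaving entangled garbage. Granting that, the accounting multiplies out to the claimed $\tO(\ell_{\max})$ quantum and classical time, $\tO(\ell_{\max})$ classical space, and $O(\log\ell_{\max})$ quantum space.
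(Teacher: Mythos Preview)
Your proposal is correct and follows exactly the paper's approach: the paper's proof is a one-line invocation of \prop{p:collimate} together with \prop{p:address}, and you have simply spelled out in detail where the QRACM lookups occur and how the $\tO(\ell_{\max})$ overhead from \prop{p:address} is absorbed. The only remark is that your ``constant number'' of lookups should also count the uncomputation reads you mention at the end, but you already noted this and it does not affect the $\tO(\ell_{\max})$ bound.
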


\cor{c:collimate} follows immediately from \prop{p:collimate}
and \prop{p:address}.  The point is that, even though there is a
performance penalty in the absence of quantum access memory, the same
algorithm still seems competitive.

\subsection{The outer algorithm}
\label{s:outer}

In this section we combine the ideas of Sections \ref{s:hideshift},
\ref{s:initfin}, \ref{s:combine}, and \ref{s:collimate} to make a complete
algorithm.  We present the algorithm with several free parameters.
We will heuristically analyze these parameters in \sec{s:heuristic}.
Then in \sec{s:rigor} we will simply make convenient choices for the
parameter to prove that the algorithm has quantum time and classical space
complexity $\exp(O(\sqrt{n}))$.

The algorithm has a recursive subroutine to produce a phase vector of
height 1.  The subroutine uses a collimation parameter $0 < m(h) \le n-h$
and a starting minimum length $\ell_0$.

\begin{algorithm}[Collimation sieve]
Input: A height $h$, a collimation parameter $m = m(h)$, a branching
parameter $r = r(h)$, a starting minimum length $\ell_0$, and access to
the oracle $U_f$.  Goal: To produce a phase vector of height $h$.
\begin{description}
\item[1.] If $h=n$, extract phase vectors
$$\ket{\psi_1},\ket{\psi_2},\ldots,\ket{\psi_s}$$
of height $n$ from the oracle as described in \sec{s:hide} until the
length of
$$\ket{\psi_\new} = \ket{\psi_1,\psi_2,\ldots,\psi_s}$$
is at least $\ell_0$.  Return $\ket{\psi_\new}$.
\item[2.] Otherwise, recursively and \emph{sequentially}
obtain a sequence of phase vectors
$$\ket{\psi_1}, \ket{\psi_2}, \ldots, \ket{\psi_r}$$
of height $h+m$.
\item[4.] Collimate the vectors mod $2^m$ using \alg{a:collimate} to
produce a phase vector $\ket{\psi_\new}$ of height $h$.  Return it.
\end{description}
\label{a:outer} \end{algorithm}

When called with $h=1$, \alg{a:outer} produces a phase vector
$$\ket{\psi} \propto \sum_{0 \le j < \ell} (-1)^{b(j)s}\ket{j}.$$
Otherwise, we pick a maximal subset $X \subseteq [\ell]$ on which $b$
is equally often $0$ and $1$.  (Note that this takes almost no work,
because the collimation step sorts $b$.)  If $X$ is empty, then we must
run \alg{a:outer} again.  Otherwise, we measure whether $\ket{\psi}$ is
in $\C[X]$.  If the measurement fails, then again we must run Subroutine
A again.  Otherwise the measured form of $\ket{\psi}$ has a qubit factor
of the form
$$\ket{0} + (-1)^s\ket{1},$$
and this can be measured to obtain the parity of $s$.

\alg{a:outer} recursively makes a tree of phase vectors that are more and
more collimated, starting with phase vectors obtained from the hiding
function $f(j,a)$ by the weak Fourier measurement.  An essential idea,
which is due to Regev and is used in his algorithm, is that with the
collimation method, the tree can be explored depth-first and does not need
to be stored in its entirety.  Only one path to a leaf needs to be stored.
No matter how the collimation parameter is set, the total quantum space
used is $O(n^2)$, while the total classical space used is $O(n\max(\ell))$.
(But the algorithm is faster with quantum access to the classical space.)

An interesting feature of the algorithm is that its middle part, the
collimation sieve, is entirely \emph{pseudoclassical}.  The algorithm begins
by applying QFTs to oracle calls, as in Shor's algorithm.  It ends with the
same parity measurement as Simon's algorithm.  These parts of the algorithm
are fully quantum in the sense that they use unitary operators that are not
permutation matrices.  However, collimation consists entirely of permutations
of the computational basis and measurements in the computational basis.

\subsection{Heuristic analysis}
\label{s:heuristic}

Heuristically the algorithm is the fastest when $r=2$.

Suppose that the typical running time of the algorithm is $f(n)$, with some
initial choice of $m = m(1)$.  First, creating a phase vector of height $h$
is similar to running the whole algorithm with $n' = n-h$.  So the total
computation time (both classical and quantum) can be estimated as
$$f(n) \approx \min_m \left(2^m + 2f(n-m)\right).$$
Here the first term is dominated by the classical work of collimation,
while the second term is the recursive work.  The two terms of the minimand
are very disparate outside of a narrow range of values of $m$.  So we
can let $g(n) = \log_2 f(n)$, and convert multiplication to addition and
approximate addition by $\max$.  (This type of asymptotic approximation
is lately known in mathematics as \emph{tropicalization}.)  We thus
obtain
$$g(n) \approx \min_m \left(\max(m,g(n-m)+1\right).$$
The solutions to this equation are of the form
$$g(\frac{m(m+1)}2+c) = m,$$
where $c$ is a constant.  We obtain the heuristic estimate
\eq{e:heuristic}{f(n) \stackrel{?}= \tO(2^{\sqrt{2n}})}
for both the quantum plus classical time complexity and the classical space
complexity of the algorithm.  We put a question mark because we have not
proven this estimate.  In particular, our heuristic calculation does not
address random fluctuations in the length estimate \eqref{e:ratio}.

If the quantum computer does not have QRACM or if it is no cheaper than
quantum memory, then the heuristic \eqref{e:heuristic} is the best
that we know how to do.  If the algorithm is implemented with QRACM,
then the purely quantum cost is proportional to the number of queries.
In this case, if there is extra classical space, we can make $m$ larger
and larger to fill the available space and save quantum time.  This is the
``second parameter" mentioned in \sec{s:intro}.  However, this adjustment
only makes sense when classical time is much cheaper than quantum time.
In particular, \eqref{e:heuristic} is our best heuristic if classical and
quantum time are simply counted equally.

If classical space is limited, then equation \eqref{e:ratio} tells us that
we can compensate by increasing $r$.  To save as much space as possible, we
can maintain $\ell = 2$ and adjust in each stage of the sieve $r$ to optimize
the algorithm.  In this case the algorithm reduces to Regev's algorithm.

\subsection{A rigorous complexity bound}
\label{s:bound}

The goal of this section is to rigorously prove $2^{O(\sqrt{n})}$ complexity
bounds for a likely inefficient modified version of \alg{a:outer}.  For
simplicity we assume that $n = m^2$, and we assume two hidden shifts $f_0$
and $f_1$.  In the first stage, we form phase vectors of length $\ell_0 =
2^{m+1}$ from $m+1$ qubits of the form \eqref{e:fourier3}.  We construct
the collimation sieve using \alg{a:collimate} to align $m$ low bits of
the phase multiplier at each stage except the last stage.  Suppose that
the output phase vector $\ket{\psi}$ from a use of \alg{a:collimate}
has length $\ell_1$.  We divide the indexing set of $\ket{\psi}$ into
segments of length $\ell_0$ and a leftover segment of length $\ell_2 <
\ell_0$.  Then we perform a partial measurement corresponding into this
partition into segments.  If the measured segment is the short one, then
the phase vector is simply discarded; in particular if $\ell_1 < \ell_0$,
then the vector is discard with no measurement.  Finally at the last stage,
we measure the phase vector $\ket{\psi}$ according to the value of $m-1$
of the remaining $m$ phase multiplier bits.  After this partial measurement
we have a residual qudit with some $\ell_1$ states.  We pair these states
arbitrarily, leaving one singleton if $\ell_1$ is odd, and again perform
the partial measurement corresponding to this partition.  Assuming that
the residual state is a qubit, and a qubit of the form \eqref{e:goal},
then we use it to measure the parity of the phase shift $s$; otherwise we
discard it and restart the entire computation.

\begin{proposition} The modified form of \alg{a:outer} uses quantum time
and classical space $2^{O(\sqrt{n})}$, and quantum space $O(\log n)$.
\end{proposition}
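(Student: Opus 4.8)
The plan is to analyze the modified algorithm's resource usage by tracking the tree structure, the per-stage complexity from \prop{p:collimate}, and the length distribution of phase vectors at each stage. Since $n = m^2$ and we collimate $m$ bits per stage (except the last), the sieve is a tree with $m$ stages of depth, so a root-to-leaf path touches $O(m) = O(\sqrt{n})$ phase vectors simultaneously. Because Regev's depth-first traversal only stores one path at a time, the quantum space is $O(\sqrt{n}) \cdot O(\log \ell_{\max}) = O(\log n)$ per vector times the number of live vectors, giving $O(\log n)$ quantum space once we check that $\log \ell_{\max} = O(m) = O(\sqrt{n})$ and that only $O(m)$ vectors are live at once; I would make this bookkeeping explicit.

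First I would bound the lengths. The starting length is $\ell_0 = 2^{m+1}$. The key to the \emph{modified} algorithm is that the awkward segmentation step forces every phase vector fed into a collimation to have length exactly $\ell_0$ (the short leftover segment is discarded), so I can treat the lengths as deterministic rather than random. With $r = 2$ inputs of length $\ell_0 = 2^{m+1}$ collimated mod $2^m$, the heuristic \eqref{e:ratio} gives $\ell_\new \approx 2^{-m}(2^{m+1})^2 = 2^{m+2}$, which after re-segmenting into blocks of length $\ell_0 = 2^{m+1}$ yields a controlled number of usable segments. The whole point of the segmentation trick is to convert the uncontrolled random fluctuations in \eqref{e:ratio} — the gap flagged by the question mark in \eqref{e:heuristic} — into a rigorous worst-case count by simply throwing away the excess and the shortfall. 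I would verify that with constant probability a collimation yields at least one full-length segment, so the expected number of repetitions per node is $O(1)$, and that $\ell_{\max} = 2^{O(m)} = 2^{O(\sqrt{n})}$ throughout.

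Next I would assemble the global cost. By \prop{p:collimate} each collimation costs $\tO(\ell_{\max})$ classical time and $O(\ell_{\max} h) = O(\ell_{\max} n)$ classical space, with $\poly(\log \ell_{\max}) = \poly(m)$ quantum time and $O(\log \ell_{\max}) = O(m)$ quantum space. Summing the quantum time over the whole tree: the tree has branching $r = 2$ and depth $m$, so it has $2^{O(m)} = 2^{O(\sqrt{n})}$ nodes, each contributing $\poly(m)$ quantum time, for a total of $2^{O(\sqrt{n})}$ quantum time. The classical space is reused across the depth-first traversal, so only the $O(m)$ live vectors on the current path contribute, giving $O(m \cdot \ell_{\max} n) = 2^{O(\sqrt{n})}$ classical space. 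The quantum space is genuinely small: $O(m)$ live vectors times $O(m)$ qubits each is $O(m^2) = O(n)$, though with Regev's trick of uncomputing completed subtrees one argues this down to $O(\log n)$ as claimed — I would invoke the depth-first argument from \sec{s:outer} that the total quantum space is $O(n^2)$ and tighten the per-vector accounting.

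The main obstacle is rigorously controlling the \emph{success} of the sieve rather than merely its resource usage — i.e.\ proving that a full-length phase vector actually survives each stage with good enough probability that the expected number of restarts stays bounded, and that the final residual qudit is a genuine qubit of the form \eqref{e:goal} with non-negligible probability. The difficulty is that \eqref{e:ratio} is only heuristic, so the honest argument must replace the expected length by a provable lower bound on the probability that $\ell_\new \ge \ell_0$; this is where the segmentation-and-discard construction earns its keep, since it lets me argue that conditioned on surviving, the phase multipliers are uniformly distributed on the aligned residue class, so the final pairing produces \eqref{e:goal} with probability bounded below by a constant. Establishing that uniformity — and hence a constant per-stage success probability, so that the $2^{O(\sqrt{n})}$ bound absorbs the $O(1)$-factor restart overhead — is the crux, and everything else is the routine summation over a depth-$O(\sqrt{n})$, branching-$2$ tree carried out above.
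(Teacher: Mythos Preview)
Your resource accounting is broadly on target and matches the paper's view that ``at the heuristic level, the bounds are straightforward.'' The paper, like you, treats the tree as having depth $m$ with branching $r=2$, so $2^{O(m)}=2^{O(\sqrt{n})}$ nodes, and invokes \prop{p:collimate} per node. Where your proposal diverges from the paper is precisely at the point you yourself flag as the crux, and there the plan would not go through as written.

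You propose to handle both failure modes---(a) the collimated vector being too short at an intermediate stage, and (b) the final qubit having trivial phase---by establishing that, conditioned on survival, the phase multipliers are \emph{uniformly distributed} on the aligned residue class. The paper does neither of these via uniformity, and in fact uniformity of $b_1+b_2 \bmod 2^m$ is not claimed and is not obviously true after several stages of selection. For (a) the paper gives a \emph{distribution-free} pigeonhole argument: the $\ell_0^2 = 2^{2m+2}$ pairs $(j_1,j_2)$ are split among only $2^m$ buckets, so regardless of how they are distributed, a uniformly random pair (which is exactly the measurement distribution) lands in a bucket of size $\ge \ell_0$ with probability bounded below by a constant; then the trimming measurement succeeds with probability $\ge 1/2$. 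No appeal to randomness of the $b$-values is needed here. For (b) the paper does use randomness, but only of the single \emph{high} bit: the initial phase multipliers have uniformly random and independent top bits, and every decision the sieve makes---which bucket, which segment, which pairing---depends only on the low bits. Hence the top bit of the final sum is still a sum of independent uniform bits, so equals $1$ with probability $1/2$. This decoupling of the high bit from all selection steps is the actual mechanism, and your ``full uniformity conditioned on survival'' is both stronger than needed and not something you could easily prove.

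A minor point: your quantum-space bookkeeping arrives at $O(m)\cdot O(m)=O(n)$ qubits for the live path and then gestures at tightening this to $O(\log n)$. The depth-first argument in \sec{s:outer} does not by itself give $O(\log n)$; you should not expect to close that gap by ``uncomputing completed subtrees'' alone, and the paper's own proof does not elaborate on this step.
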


\begin{proof} At the heuristic level, the bounds are straightforward.
The question is to establish that the modified algorithm succeeds at
each stage with probability bounded away from $0$.  The algorithm can
locally fail in three ways: (1) In the intermediate stages, it can make
a phase vector which is too short, either with $\ell_1 < \ell_0$, or the
trimming measurement could leave the remnant of length $\ell_2 < \ell_0$.
(2) At the final stage, we might have $\ell_1 < 2$, or again have a
remnant after the trimming measurement might have length $\ell_2 < 2$.
(3) If a qubit is produced at the very end, it could have a trivial phase
multiplier rather than the form \eqref{e:goal}.

To address the first problem, we have two phase vectors $\ket{\psi_1}$
and $\ket{\psi_2}$ with tables of phase multipliers $b_1$ and $b_2$.
The combined phase vector $\ket{\psi_1,\psi_2}$ then has the phase
multiplier $b_1+b_2$, and we measure $\ket{\psi_1,\psi_2}$ according to
the low $m$ bits of $b_1+b_2$.  Heuristically we can suppose that $b_1+b_2$
is randomly distributed, even though it cannot be exactly true.  We claim
that at the rigorous level, the modified sieve has an adequate chance of
success regardless of the distribution of $b_1+b_2$.  The phases listed
in $b_1+b_2$ are divided among only $2^m$ buckets.  If we pick an entry at
random, which gives the correct distribution for the partial measurement,
then with probability at least $3/4$, then it lies in a bucket of size
$\ell_1 \ge \ell_0$, regardless of how the entries are distributed.  Then,
if $\ell_1 \ge \ell_0$, the probability that the trimming measurement
creates a phase vector length exactly $\ell_0$ is at least $1/2$.

The second problem is addressed in the same way: The $2^{m+1}$ terms of the
last-stage phase vector are divided among $2^{m-1}$ buckets, so measuring
which bucket produces a qudit of length $\ell_2 \ge 2$ with probability
at least $3/4$.  Then the trimming measurement produces a qubit with
probability at least $2/3$.

Finally the third problem requires some knowledge of the distribution of
the phase multipliers.  The final phase multiplier is either $2^{n-1}$ or
$0$, and the former value is the favorable one that allows us to measure
the parity of $s$.  Recall that the initial phase qubits \eqref{e:fourier3}
had uniformly random phase multipliers; in particular the highest bits are
uniformly random and independent.  All of the decisions in the algorithm
so far depend only on the other bits of the phase multiplier.  The final
phase multiplier is a sum of some of the high bits of the initial phase
multipliers, and is therefore also uniformly random.  So we obtain the
state \eqref{e:goal} with probability $1/2$ at this stage.
\end{proof}

\section{Conclusions}

At first glance, the running time of our new algorithm for DHSP or hidden
shift is ``the same" as our first algorithm, since both algorithms run
in time $2^{O(\sqrt{\log N})}$.  Meanwhile Regev's algorithm runs in time
$2^{O(\sqrt{(\log N)(\log \log N)}}$, which may appear to be almost as fast.
Of course, these expressions hide the real differences in performance between
these algorithms, simply because asymptotic notation has been placed in
the exponent.  All polynomial-time algorithms with input of length $n$
run in time
$$n^{O(1)} = 2^{O(\log n)}.$$
Nonetheless, polynomial accelerations are taken seriously in complexity
theory, whether they are classical or quantum accelerations.

For many settings of the parameters, \alg{a:outer} is superpolynomially
faster than Regev's algorithm.  It is Regev's algorithm if we have
exponentially more quantum time than classical space.  However, in real
life, classical computation time has only scaled polynomially faster than
available classical computer memory.  So it is reasonable to consider a
future regime in which quantum computers exist, but classical memory is
cheaper than quantum time, or is only polynomially more expensive.

Regev \cite{Regev:quantum} established a reduction from certain lattice
problems (promise versions of the short vector and close vector problems)
to the version of DHSP or hidden shift in which $f(a)$ and $g(a+s)$ are
overlapping quantum states.  At first glance, our algorithms apply to this
type of question.  However, we have not found quantum accelerations for these
instances.  The fundamental reason is that we have trouble competing with
classical sieve algorithms for these lattice problems \cite{AKS:lattice}.
The classical sieve algorithms work in position space, while our algorithms
work in Fourier space, but otherwise the algorithms are similar.  Instead,
DHSP seems potentially even more difficult than related lattice problems
(since that is the direction of Regev's reduction) and the main function
of our algorithms is to make DHSP roughly comparable to lattice problems
on a quantum computer.

One significant aspect of \alg{a:outer}, and also in a way Regev's algorithm,
is that it solves the hidden subgroup problem for a group $G = D_N$ without
staying within the representation theory of $G$ in any meaningful way.
It could be interesting to further explore non-representation methods
for other hidden structure problems.


\providecommand{\bysame}{\leavevmode\hbox to3em{\hrulefill}\thinspace}
\providecommand{\MR}{\relax\ifhmode\unskip\space\fi MR }
\providecommand{\MRhref}[2]{%
  \href{http://www.ams.org/mathscinet-getitem?mr=#1}{#2}
}
\providecommand{\href}[2]{#2}
\providecommand{\eprint}{\begingroup \urlstyle{tt}\Url}

\end{document}